\newtheorem{definition}{Definition}
\newtheorem{theorem}{Theorem}
\newtheorem{lemma}{Lemma}
\title{\huge\bf Double Circulant Self-Dual Codes From Generalized Cyclotomic Classes of Order Two}
\author{Wenpeng Gao$^1$ and Tongjiang Yan$^2$\\
\medskip {\small\it(Corresponding author: Tongjiang Yan)}~\\
{\normalsize College of Sciences, China University of Petroleum$^1$}\\
{\normalsize Qingdao, 266555, China}\\
{\normalsize College of Sciences, China University of Petroleum; Shandong Provincial Key Laboratory of Computer Networks $^2$}\\
{\normalsize Key Laboratory of Applied Mathematics(Putian University), Fujian Province University $^2$}\\
{\normalsize Qingdao, 266555, China; Jinan, 250014, China; Fujian Putian, 351100, China }\\
{\normalsize (Email: yantoji@163.com)}\\
{\normalsize\it (Received 2019; revised and accepted 2019) }
}
\date
\begin{document}
\maketitle
\thispagestyle{headings}
\begin{abstract}
In this paper, constructions of some double circulant self-dual codes by generalized cyclotomic classes of order two are presented. This technique is applied to [72, 36, 12] binary highest know self-dual codes to obtain self-dual codes over GF(2) and [32, 16, 8] almost optimal self-dual codes over GF(4). Based on the properties of generalized cyclotomy, some of these codes can be proved to possess good minimum weights.
\vspace*{0.1cm}
~\\
{\it Keywords: Double circulant code; Generalized cyclotomic classes; Self-Dual code}
\end{abstract}

\section{Introduction}
Self-dual codes have important applications in transmission~\cite{Garcia2010Application}. Constructing self-dual codes which have good minimum weight has been an important research problem~\cite{Alahmadi2017On}. And a number of papers devoted to constructing self-dual codes ~\cite{Alahmadi2018On}~\cite{Arasu2001Self}. Most of self-dual codes which are double circulant
codes have high minimum distance ~\cite{Gaborit2002Quadratic}. The generalized cyclotomic classes have wide applications in constructing sequences [9,13,20], cyclic codes ~\cite{Ding2012Cyclic}~\cite{Ding2013Cyclic} and difference sets~\cite{Feng2012Cyclotomic}. There are a number of results about generalized cyclotomic classes of order two~\cite{Ding1998Autocorrelation}~\cite{Wang2016Generalized}. Motivated by the constructions of double circulant codes by cyclotomic classes of order four~\cite{Tao2015Fourth}, we give constructions of double circulant codes by generalized cyclotomic classes of order two. All computations have been done by MAGMA V2.20-4~\cite{Bosma1997The} on a 3.40 GHz CPU.

This paper is organized as follows. In Section II, we present definitions and some preliminaries about self-dual codes and generalized cyclotomic classes. In Section III, three infinite families of  self-dual codes are given. Section IV concludes the paper.

\section{Preliminaries}
\subsection{Self-Dual codes}
A linear code $C$ of length $n$ and dimension $k$ over finite field GF$(l)$ is a $k$-dimensional subspace of GF$(l)^{n}$, where $l$ is a prime power. A generator matrix $G$ of the code $C$ is a $k \times n$ matrix over GF$(l)$. The elements of $C$ are called codewords. The Euclidean inner product is defined by
\begin{center}
$(x, y)=\sum_{i=1}^{n}x_{i}y_{i},$
\end{center}
where $x = (x_1, \dots, x_n)$ and $y = (y_1, \dots, y_n)$.  The  Euclidean dual code, denoted by $C^\top$, of linear code $C$ is defined by
\begin{center}
$C^\top=\{x\in$ GF$(l)^n|(x,c)=0,  \textrm{for all } c \in C\}.$
\end{center}
And we have $dim(C)+dim(C^\top) = n$. $C$ is said to be Euclidean self-orthogonal if $C \subset C^\top$ and Euclidean self-dual if $C = C^\top$. If $C$ is dual-self code, the dimension of $C$ is $n/2$. From now on, what we mean by self-dual is Euclidean self-dual.

\subsection{Hamming Distance}
The Hamming distance of codewords $x = (x_1, \dots, x_n)$ and $y = (y_1, \dots, y_n)$, denoted by $d(x, y)$, is defined to be the number of places at which $x$ and $y$ differ. The Hamming weight $wt(x)$ of a codeword $x = (x_1, \dots, x_n)$ is $wt(x) = d(x, 0)$ and the minimum Hamming distance $d(C)$ of a code $C$ is equal to the minimum nonzero Hamming weight of all codewords in $C$. Then for the self-dual codes, we have the following result.
\begin{theorem}\label{Theorem 1}
~\cite{Nebe2006Self}~\cite{Rains1998Shadow} Let $C$ be a self-dual code over \rm{GF}$(l)$ of length $n$ and minimum Hamming distance $d(C)$. Then we have
\begin{enumerate}[(i)]
\item If $l=2$, then
\begin{center}
$d(C)\leq
\begin{cases}
4\left\lfloor \frac{n}{24} \right\rfloor+4, & \textrm{if $n\not\equiv 22 \pmod{24}$;}\\
4\left\lfloor \frac{n}{24} \right\rfloor+6, & \textrm{if $n\equiv 22 \pmod{24}$.}\\
\end{cases}$
\end{center}
\item If $l=3$, then $d(C)\leq 3\left\lfloor \frac{n}{12} \right\rfloor+3$.
\item If $l=4$, then $d(C)\leq 4\left\lfloor \frac{n}{12} \right\rfloor+4$.
\item $d(C)\leq \left\lfloor \frac{n}{2} \right\rfloor+1$ for $l\neq 2,3,4$.
\end{enumerate}
\end{theorem}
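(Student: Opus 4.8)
First I would dispose of part (iv), which is the Singleton bound in disguise: a self-dual code has dimension $k = n/2$, so $d(C) \le n - k + 1 = n/2 + 1 = \lfloor n/2\rfloor + 1$ (recall $n$ is even). The hypothesis $l \neq 2,3,4$ merely records that for those small fields parts (i)--(iii) give sharper estimates.

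Parts (i)--(iii) all rest on the invariant theory of weight enumerators. Writing $W_C(x,y) = \sum_{c\in C} x^{\,n-wt(c)}\, y^{\,wt(c)}$, self-duality gives $W_C = W_{C^\top}$, and the MacWilliams identity turns this into the functional equation $W_C(x,y) = l^{-n/2}\, W_C\big(x+(l-1)y,\ x-y\big)$. Self-duality over each field also forces a weight-divisibility congruence --- weights even over GF$(2)$, divisible by $3$ over GF$(3)$, and the corresponding congruence over GF$(4)$ --- which is a second, diagonal invariance of $W_C$. Together these substitutions generate a finite matrix group $G_l$, and the crucial input is Gleason's theorem: the ring of $G_l$-invariant homogeneous polynomials is a free polynomial algebra on two explicit generators (for $l=2$, of degrees $2$ and $8$). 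Expanding $W_C$ in this basis and demanding that the coefficients of $y^1,\dots,y^{d-1}$ all vanish recasts ``minimum distance $\ge d$'' as a linear system; since the invariant ring supplies only about $n/24$ (resp. $n/12$) free parameters in degree $n$, the system is overdetermined once $d$ exceeds $4\lfloor n/24\rfloor + 4$ (resp. $3\lfloor n/12\rfloor + 3$ and $4\lfloor n/12\rfloor + 4$), which is the stated bound.

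The delicate point --- and the source of the improved $+6$ when $n \equiv 22 \pmod{24}$ --- is the shadow refinement due to Rains. Here I would pass to the doubly-even subcode $C_0 \subset C$ of index two, form the shadow $S = C_0^\top \setminus C$, and obtain its weight enumerator $W_S$ from $W_{C_0}$ by one further MacWilliams-type transform. Every coefficient of $W_S$ counts vectors and so must be a nonnegative integer; pushing $d$ one step past the Gleason bound would force some shadow coefficient negative, except precisely in the class $n \equiv 22 \pmod{24}$, where the obstruction relaxes and $4\lfloor n/24\rfloor + 6$ survives. I expect the main obstacle to be exactly this nonnegativity bookkeeping --- identifying which shadow coefficient first turns negative --- whereas the preceding coefficient extraction is routine once Gleason's theorem is in hand.
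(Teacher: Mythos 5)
The paper does not actually prove this theorem: it is quoted directly from the cited sources \cite{Nebe2006Self} and \cite{Rains1998Shadow}, so there is no in-paper argument to compare against. Your outline reconstructs the standard proof from those sources --- the Singleton bound for (iv), Gleason's theorem plus the Mallows--Sloane coefficient count for (i)--(iii), and Rains' shadow argument for the binary refinement --- and at that level it is the right route.

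One substantive correction, however: in part (i) the shadow is not merely the source of the exceptional $+6$. A general binary self-dual code is only singly even; its weight enumerator lies in the Type~I invariant ring, whose generators have degrees $2$ and $8$, and the pure Gleason/Mallows--Sloane count there yields only $d\le 2\left\lfloor \frac{n}{8}\right\rfloor+2$. The count with roughly $n/24$ free parameters that you invoke belongs to the Type~II ring (generators of degrees $8$ and $24$) and applies only to doubly even codes. It is Rains' shadow argument that upgrades the singly even bound to $4\left\lfloor \frac{n}{24}\right\rfloor+4$ for all residues, with $n\equiv 22 \pmod{24}$ being the one class where the obstruction relaxes to $+6$. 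So the shadow carries the whole of part (i) for Type~I codes, and the ``nonnegativity bookkeeping'' you defer is precisely the content of \cite{Rains1998Shadow} rather than a routine afterthought; as written, your argument would not establish the stated bound for singly even codes. The skeleton is sound, but it remains a sketch of the literature proof, which is presumably why the paper simply cites it.
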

The code $C$ is called extremal if the above equality holds. A self-dual code is called optimal if it has the highest possible minimum distance for its length.
\subsection{Generalized Cyclotomy and Generalized Cyclotomic Number}

Let $n = pq$, where $p$ and $q$ are distinct odd primes with $\gcd(p - 1, q - 1) = d$. The Chinese Remainder Theorem \cite{Ding1996Chinese} guarantees that there exists common primitive roots of both $p$ and $q$. Let $g$ be a fixed common primitive root of both $p$ and $q$. Let $e := {\rm ord}_n(g)$ denote the multiplicative order of $g$ modulo $n$, then
\begin{equation*}
\begin{aligned}
{\rm ord}_n(g) &= {\rm lcm}({\rm ord}_p(g),{\rm ord}_q(g))\\
 &= {\rm lcm}(p-1, q-1) = \frac{(p-1)(q-1)}{d}.
\end{aligned}
\end{equation*}
Let $x$ be an integer satisfying
\begin{center}
$x\equiv g \pmod{p},\quad x\equiv 1 \pmod{q}.$
\end{center}
Whiteman \cite{Stanton1958A} defined the generalized cyclotomic classes
\begin{equation*}\label{generalized cyclotomic classes}
C_i=\{g^sx^i : s=0, 1, \dots , e-1\}, i=0, 1, \dots, d-1.
\end{equation*}
In this paper, we shall always assume that $p, q$ are distinct primes with $\gcd(p-1, q-1) = 2$.
\begin{equation*}
\begin{split}
P&=\{p, 2p, \dots, (q-1)p\},\\
Q&=\{q, 2q, \dots, (p-1)q\},\\
R&=\{0\}.
\end{split}
\end{equation*}
The generalized cyclotomic numbers corresponding of order two are defined by
\begin{equation*}
(i,j) =\left|(C_i+1) \cap C_j\right|,   \textrm{for all $i, j =0, 1$.}
\end{equation*}
\begin{lemma}\label{Lemma 5}
 \cite{Ding1998Autocorrelation}$\quad -1 \in C_0 $  \textrm{if $(p-1)(q-1)/4$ is even}; $-1 \in C_1$  \textrm{if $(p-1)(q-1)/4$ is odd}.\\
\end{lemma}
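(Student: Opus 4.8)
The plan is to work inside the unit group $(\mathbb{Z}/n\mathbb{Z})^{*}$ and to reduce the membership question $-1 \in C_0$ to an elementary congruence-solvability problem via the Chinese Remainder Theorem. The point is that $C_0 = \langle g\rangle$ is exactly the cyclic subgroup generated by the common primitive root $g$, while $C_1 = x\,C_0$ is its nontrivial coset; since each has $e = (p-1)(q-1)/2$ elements and $x \notin C_0$ (which follows from $x \equiv g,\,1$ modulo $p,\,q$ together with $\gcd(p-1,q-1)=2$), the two classes partition the $\phi(n)=(p-1)(q-1)$ units of $n$. As $-1$ is a unit it lies in exactly one of $C_0,\,C_1$, so it suffices to decide whether $-1 \in C_0$.

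The key steps run as follows. First I would transport everything through the isomorphism $(\mathbb{Z}/n\mathbb{Z})^{*} \cong (\mathbb{Z}/p\mathbb{Z})^{*} \times (\mathbb{Z}/q\mathbb{Z})^{*}$, under which $g$ maps to a pair of primitive roots and $-1$ maps to $(-1,-1)$. Second, writing discrete logarithms with respect to those primitive roots, $-1 \equiv g^{(p-1)/2}\pmod p$ and $-1 \equiv g^{(q-1)/2}\pmod q$, so $-1\in C_0=\langle g\rangle$ holds precisely when a single exponent $s$ satisfies
\begin{equation*}
s \equiv \tfrac{p-1}{2} \pmod{p-1}, \qquad s \equiv \tfrac{q-1}{2} \pmod{q-1}.
\end{equation*}
Third, by the solvability criterion for simultaneous congruences such an $s$ exists if and only if $\tfrac{p-1}{2} \equiv \tfrac{q-1}{2} \pmod{\gcd(p-1,q-1)}$, and since $\gcd(p-1,q-1)=2$ this collapses to a single parity condition on $\tfrac{p-1}{2}$ and $\tfrac{q-1}{2}$. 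Finally I would translate that parity condition into a parity statement about the product $\tfrac{(p-1)(q-1)}{4} = \tfrac{p-1}{2}\cdot\tfrac{q-1}{2}$ — the quantity appearing in the lemma — and, using $-1\in C_0 \Leftrightarrow -1\notin C_1$, read off the two asserted cases.

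The step I expect to be the main obstacle is exactly this last calibration: pinning down which of $C_0,\,C_1$ receives $-1$ in each parity regime so that the correspondence comes out as worded. The solvability criterion equates $-1\in C_0$ with $\tfrac{p-1}{2}$ and $\tfrac{q-1}{2}$ having equal parity, and converting this into the parity of $\tfrac{(p-1)(q-1)}{4}$ while tracking which coset absorbs the factor $x$ is genuinely sign-sensitive — a single slip interchanges the two cases. I would therefore not trust the bookkeeping in the abstract but fix the orientation on explicit small instances, computing $C_0,\,C_1$ and locating $-1\equiv n-1$ directly for $p=3,\,q=5$ (where $\tfrac{(p-1)(q-1)}{4}=2$) and for $p=7,\,q=3$ (where it equals $3$). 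That anchoring is where the real content of the lemma lives and where the risk of inverting the two cases is greatest, so it is the part I would write out in full before committing to the direction of the final statement.
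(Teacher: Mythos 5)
Your route---identify $C_0=\langle g\rangle$ and $C_1=xC_0$ as the two cosets partitioning $(\mathbb{Z}/n\mathbb{Z})^{*}$, pass through the CRT isomorphism, use $-1\equiv g^{(p-1)/2}\pmod p$ and $-1\equiv g^{(q-1)/2}\pmod q$, and reduce $-1\in C_0$ to solvability of $s\equiv\frac{p-1}{2}\pmod{p-1}$, $s\equiv\frac{q-1}{2}\pmod{q-1}$---is sound, and since the paper supplies no proof at all (Lemma~\ref{Lemma 5} is quoted from the literature with only a citation), this is the natural argument to give. Your intermediate claims all check out, including $x\notin C_0$ and the solvability criterion modulo $\gcd(p-1,q-1)=2$.

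The genuine gap is that you stop at precisely the step you yourself identify as the entire content of the lemma: the final calibration is deferred to unexecuted numerical anchoring, with the stated intention to ``read off the two asserted cases.'' Had you executed it, you would have found that your argument, completed correctly, \emph{contradicts} the statement as printed. Since $\gcd\bigl(\frac{p-1}{2},\frac{q-1}{2}\bigr)=1$, the two halves cannot both be even, so ``equal parity'' is equivalent to ``both odd,'' i.e.\ to $\frac{(p-1)(q-1)}{4}=\frac{p-1}{2}\cdot\frac{q-1}{2}$ being \emph{odd}; your criterion therefore yields $-1\in C_0$ iff $(p-1)(q-1)/4$ is odd and $-1\in C_1$ iff it is even---the two cases of the lemma interchanged. (Incidentally, no ``tracking of which coset absorbs $x$'' enters: membership of $-1$ in $C_0=\langle g\rangle$ involves no $x$ at all.) Your own proposed anchors confirm the reversed orientation: for $p=3$, $q=5$ take $g=2$, $x=11$, so $C_0=\{1,2,4,8\}\subset\mathbb{Z}/15\mathbb{Z}$ and $-1\equiv 14=11\cdot 4\in C_1$, although $(p-1)(q-1)/4=2$ is even; for $p=7$, $q=3$ take $g=5$, so $C_0=\{1,4,5,16,17,20\}\subset\mathbb{Z}/21\mathbb{Z}$ and $-1\equiv 20\in C_0$, with the quotient $3$ odd. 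Note that the corrected orientation is also what the paper itself needs later: in Lemma~\ref{Lemma 2.10} and Lemma~\ref{Lemma 2.9} one has $\{p,q\}\equiv\{1,3\}\pmod 4$, so $(p-1)(q-1)/4$ is even, and the asserted identities $A_1=A_2^{\top}$, $A_2=A_1^{\top}$ require $-C_0=C_1$, i.e.\ $-1\in C_1$, consistent with your completed argument and inconsistent with Lemma~\ref{Lemma 5} as transcribed. So the fix is concrete: carry out the parity bookkeeping explicitly rather than calibrating to the printed wording, and flag the statement as having its two cases swapped in transcription.
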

\begin{lemma}\label{Lemma 6}
 \cite{Stanton1958A} If $(p-1)(q-1)/4$ is even,
 \begin{equation*}
\begin{aligned}
 (0,0)&=(1,0)=(1,1)=\frac{(p-2)(q-2)+1}{4},\\
 (0,1)&=\frac{(p-2)(q-2)-3}{4}.
\end{aligned}
\end{equation*}
 \indent If $(p-1)(q-1)/4$ is odd,
 \begin{equation*}
\begin{aligned}
(0,0)&=\frac{(p-2)(q-2)+3}{4},\\
(0,1)&=(1,0)=(1,1)=\frac{(p-2)(q-2)-1}{4}.
\end{aligned}
\end{equation*}
\end{lemma}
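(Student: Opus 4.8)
The plan is to recast each cyclotomic number as a quadratic character sum and evaluate it with classical formulas. First I would translate the index description of the classes into the language of the Jacobi symbol. Writing $\mathrm{ind}_p$ and $\mathrm{ind}_q$ for the discrete logarithms to base $g$ modulo $p$ and $q$, the parametrization $c = g^s x^i$ gives $\mathrm{ind}_p(c) \equiv s+i$ and $\mathrm{ind}_q(c) \equiv s$, so a unit $c$ lies in $C_i$ exactly when $\mathrm{ind}_p(c) - \mathrm{ind}_q(c) \equiv i \pmod 2$. Since $\left(\frac{c}{p}\right) = (-1)^{\mathrm{ind}_p(c)}$ and likewise modulo $q$, this says precisely that $C_0 = \{c : \left(\frac{c}{p}\right)\left(\frac{c}{q}\right) = 1\}$ and $C_1 = \{c : \left(\frac{c}{p}\right)\left(\frac{c}{q}\right) = -1\}$; that is, class membership is governed by the Jacobi symbol $\left(\frac{c}{n}\right)$.

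Next I would insert the indicator $\tfrac12\big(1 + \epsilon_i \left(\frac{c}{n}\right)\big)$ (with $\epsilon_0 = 1$, $\epsilon_1 = -1$) for the event $c \in C_i$, and the analogous factor for $c+1 \in C_j$, and expand the product. Since $(i,j)$ counts those $c$ for which both $c$ and $c+1$ are units, this yields $(i,j) = \tfrac14\big(M + \epsilon_i S_1 + \epsilon_j S_2 + \epsilon_i\epsilon_j S_3\big)$, where $M = \#\{c : c,\ c+1 \text{ both units}\}$, $S_1 = \sum \left(\frac{c}{n}\right)$, $S_2 = \sum \left(\frac{c+1}{n}\right)$, and $S_3 = \sum \left(\frac{c(c+1)}{n}\right)$, all sums over the same range. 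By the Chinese Remainder Theorem each quantity factors as a product of a sum modulo $p$ and a sum modulo $q$.

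The evaluations are then routine. A direct count gives $M = (p-2)(q-2)$. For $S_3$ the nonunit terms vanish automatically, so $S_3 = \big(\sum_a \left(\frac{a(a+1)}{p}\right)\big)\big(\sum_b \left(\frac{b(b+1)}{q}\right)\big)$, and each factor equals $-1$ by the standard formula $\sum_a \left(\frac{a^2 + \alpha a + \beta}{p}\right) = -1$ for nonzero discriminant; hence $S_3 = 1$. A parallel computation, after shifting the summation variable, gives $S_2 = 1$. For $S_1$ the unit restriction on $c+1$ must be imposed by hand, leaving $S_1 = \big(-\left(\frac{-1}{p}\right)\big)\big(-\left(\frac{-1}{q}\right)\big) = (-1)^{(p-1)/2 + (q-1)/2}$.

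The crucial observation is that $\gcd(p-1,q-1) = 2$ forces $(p-1)/2$ and $(q-1)/2$ to be coprime, so they cannot both be even; consequently $S_1 = +1$ exactly when $(p-1)(q-1)/4$ is odd and $S_1 = -1$ exactly when it is even, consistent with Lemma \ref{Lemma 5} on the location of $-1$. Substituting the values of $M, S_1, S_2, S_3$ into the four expressions for $(i,j)$ and simplifying reproduces the two cases of the statement. The main obstacle is bookkeeping rather than depth: one must track that the ``both units'' restriction affects the sums differently — it is automatic for $S_3$ but must be enforced explicitly for $M$ and for $S_1$ at the residues $c \equiv -1$ — and one must correctly tie the sign of $S_1$ to the parity of $(p-1)(q-1)/4$ through the coprimality forced by $\gcd(p-1,q-1)=2$.
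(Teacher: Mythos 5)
Your proof is correct. Note, however, that the paper offers no proof of this lemma at all: it is imported verbatim from Stanton and Sprott \cite{Stanton1958A}, so there is nothing in the text to compare against line by line. Your argument is therefore a genuine, self-contained alternative to the cited source. The classical derivations (Whiteman, Stanton--Sprott) proceed combinatorially, by counting solutions of $g^s x^i + 1 = g^t x^j$ and exploiting symmetry relations among the $(i,j)$ together with the row-sum identities $\sum_j (i,j) = e - \delta$; your route instead identifies $C_0$ and $C_1$ as the kernel and non-kernel coset of the Jacobi character $\left(\tfrac{\cdot}{n}\right)$ (valid here precisely because $d=2$, so the two classes partition the units and the containment-plus-cardinality argument closes the equivalence), inserts the indicator $\tfrac12\bigl(1+\epsilon_i\left(\tfrac{c}{n}\right)\bigr)$, and reduces everything to the elementary sums $M=(p-2)(q-2)$, $S_2=S_3=1$, $S_1=(-1)^{(p-1)/2+(q-1)/2}$. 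All four evaluations check out, including the two bookkeeping subtleties you flag (the unit restriction is automatic for $S_3$ but must be imposed for $S_1$, $S_2$, costing the boundary terms at $a\equiv -1$ resp.\ $a\equiv 1$), and the tie between the sign of $S_1$ and the parity of $(p-1)(q-1)/4$ via $\gcd\bigl(\tfrac{p-1}{2},\tfrac{q-1}{2}\bigr)=1$ is exactly right and consistent with Lemma \ref{Lemma 5}. Substituting into $(i,j)=\tfrac14(M+\epsilon_i S_1+\epsilon_j S_2+\epsilon_i\epsilon_j S_3)$ reproduces both displayed cases. What your approach buys is transparency and generalizability (the same template with higher-order characters and Jacobi sums handles $d>2$); what the combinatorial original buys is independence from character-sum machinery.
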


\subsection{General Results }

Let $m_0$, $m_1$, $m_2$, $m_3$ and $m_4$ be elements of GF$(l)$. The matrix $C_n(m_0, m_1$, $m_2, m_3$, $m_4)$ is the $n\times n$ matrix on GF$(l)$  with components $c_{ij}$ , $1 \leq i, j \leq n$ ,
$$c_{ij}=
 \begin{cases}
m_0  & \textrm{if $j=i$,}\\
m_1  & \textrm{if $j-i \in P$,}\\
m_2  & \textrm{if $j-i \in Q$,}\\
m_3  & \textrm{if $j-i \in C_0$,}\\
m_4  & \textrm{if $j-i \in C_1$,}\\
\end{cases}$$
Define by $I_n$ and $J_n$ the identity and the all-one square $n \times n$ matrices. Then $C_n(1, 0, 0, 0, 0) = I_n $ and $C_n(1, 1, 1, 1, 1) = J_n$. Denote $P_n:= C_n(0, 1, 0, 0, 0),$ $Q_n:= C_n(0, 0, 1, 0, 0),$ $A_1:= C_n(0, 0, 0, 1, 0)$ and  $A_2:= C_n(0, 0,$ $ 0, 0, 1)$.

\begin{lemma}\label{Lemma 2.10}
If $p$ is a prime of the form $4\omega+1$ and $q$ is a prime of the form $4\omega^{'}+3$. Then
\begin{equation*}
\begin{aligned}
A_1=&A_2^\top, A_2=A_1^\top, P_n=P_n^\top, Q_n=Q_n^\top,\\
P_nA_1=&A_1P_n=Q_n+A_2, P_nA_2=A_2P_n=Q_n+A_1,\\
Q_nA_1=&A_1Q_n=A_1,  Q_nA_2=A_2Q_n=A_2,\\
P_nQ_n=&Q_nP_n=A_1+A_2,P_n^2=P_n, Q_n^2=Q_n,\\
A_1^2=&(1,0)A_1+(0,1)A_2,\\
 A_2^2=&(0,1)A_1+(1,0)A_2,\\
A_1A_2=&Q_n+(0,0)A_1+(1,1)A_2,\\
A_2A_1=&Q_n+(1,1)A_1+(0,0)A_2.
\end{aligned}
\end{equation*}
\end{lemma}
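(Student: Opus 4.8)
The plan is to reinterpret the five matrices as elements of the group algebra $\mathrm{GF}(l)[\mathbb{Z}_n]$ and to read every assertion off as an identity of convolutions. To a subset $S\subseteq\mathbb{Z}_n$ I attach $\widehat S=\sum_{a\in S}x^{a}$; the matrix whose $(i,j)$ entry is the indicator of $j-i\in S$ is precisely multiplication by $\widehat S$, so $S\mapsto\widehat S$ carries matrix product to convolution, $\widehat S\,\widehat T=\sum_a|\{(s,t)\in S\times T:s+t=a\}|\,x^{a}$, and transposition to negation, $M_S^{\top}=M_{-S}$. Since $\mathbb{Z}_n$ is abelian the algebra is commutative, which settles every ``$XY=YX$'' assertion at once. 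I work in characteristic two (the $\mathrm{GF}(2)$ and $\mathrm{GF}(4)$ setting of the paper), so an integer coefficient influences a relation only through its parity.

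Next I would settle the transposes and record the two parities that drive everything. As $-kp\equiv(q-k)p$ and $-jq\equiv(p-j)q$, the sets $P,Q$ are closed under negation, so $P_n^{\top}=P_n$, $Q_n^{\top}=Q_n$. Writing $C_0=\{u:(\tfrac up)=(\tfrac uq)\}$, the hypothesis $p\equiv1$, $q\equiv3\pmod4$ gives $(\tfrac{-1}{p})=1\neq-1=(\tfrac{-1}{q})$, hence $-1\in C_1$ (the relevant instance of Lemma~\ref{Lemma 5}); thus $-C_0=C_1$ and $A_1^{\top}=A_2$, $A_2^{\top}=A_1$. The same hypothesis makes $(p-1)/2$ even and $(q-1)/2$ odd and places us in the ``even'' case of Lemma~\ref{Lemma 6}. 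Using $\mathbb{Z}_n\cong\mathbb{Z}_p\times\mathbb{Z}_q$, I factor $\mathrm{GF}(l)[\mathbb{Z}_n]\cong\mathrm{GF}(l)[\mathbb{Z}_p]\otimes\mathrm{GF}(l)[\mathbb{Z}_q]$; with $e_0=x^{0}$, $\mathbf 1_{\ast}$ the sum over the units, and $D,N$ the quadratic residue and non-residue sums in each factor, one has $\widehat P=e_0\otimes\mathbf 1_{\ast}$, $\widehat Q=\mathbf 1_{\ast}\otimes e_0$, $\widehat{C_0}=D_p\otimes D_q+N_p\otimes N_q$, $\widehat{C_1}=D_p\otimes N_q+N_p\otimes D_q$. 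The single-prime identities $\mathbf 1_{\ast}D_p\equiv D_p$ (from $(p-1)/2$ even) and $\mathbf 1_{\ast}D_q\equiv e_0+N_q$ (from $(q-1)/2$ odd), with their residue/non-residue analogues, then yield $P_n^{2}=P_n$, $Q_n^{2}=Q_n$, $P_nQ_n=A_1+A_2$, $Q_nA_1=A_1$, $Q_nA_2=A_2$, $P_nA_1=Q_n+A_2$, $P_nA_2=Q_n+A_1$ upon collecting tensors.

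The heart is $A_1^{2},A_2^{2},A_1A_2,A_2A_1$. The index-two subgroup $C_0$ acts by multiplication, fixing each of $R,P,Q,C_0,C_1$ and acting transitively on each (its projections onto $\mathbb{Z}_p^{\ast}$ and $\mathbb{Z}_q^{\ast}$ are onto), and it permutes the representation pairs $s+t=a$; hence every such convolution is constant on the five classes and is automatically a combination of $I_n,P_n,Q_n,A_1,A_2$, leaving only five scalars. For the two ``unit'' coefficients I scale the count $|\{u\in C_i:a-u\in C_j\}|$ by $a^{-1}$ and substitute $v=u-1$, invoking $-1\in C_1$ to rewrite it as $|(C_{i'}+1)\cap C_{j'}|$; this matches the $A_1,A_2$ coefficients to the generalized cyclotomic numbers of Lemma~\ref{Lemma 6} exactly as displayed (for instance $A_1^{2}$ returns $(1,0)$ on $C_0$ and $(0,1)$ on $C_1$, and $A_1A_2$ returns $(0,0)$ and $(1,1)$).

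It remains to pin down the degenerate $R,P,Q$ coefficients mod two, and I expect this bookkeeping to be the main obstacle, since each must match the precise right-hand side (most vanish, but $A_1A_2$ keeps a single $Q_n$). For the squares $A_1^{2},A_2^{2}$ the swap $(c,c')\mapsto(c',c)$ is a fixed-point-free involution on the pairs over $R\cup P\cup Q$ --- the only candidate, $2c=a$, forces the non-unit $c=2^{-1}a\notin C_0\cup C_1$ --- so these coefficients are even and drop out. For $A_1A_2,A_2A_1$ the CRT count splits off a factor $(p-1)/2$ (even) at a $P$-target, killing it, and a factor $(q-1)/2$ (odd) at a $Q$-target multiplied by a single-prime count whose odd parity I read from the reflection $c_p\mapsto\alpha-c_p$ (exactly one fixed point $c_p=\alpha/2$); this is what leaves the surviving $Q_n$. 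The $R$-coefficient of $A_1A_2$ equals $|C_0|=(p-1)(q-1)/2$, even. Thus the entire table rests on the two parity facts $(p-1)/2$ even and $(q-1)/2$ odd, that is, on $p\equiv1$, $q\equiv3\pmod4$; these are exactly what remove the unwanted $I_n$ and $P_n$ contributions and most of the $Q_n$ ones, leaving the clean relations claimed.
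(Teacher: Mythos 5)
The paper's own ``proof'' of this lemma is the single sentence that it follows from the definitions, so you have effectively supplied the argument the paper omits, and your argument is sound. Passing to the group algebra $\mathrm{GF}(l)[\mathbb{Z}_n]$, factoring it through the CRT as $\mathrm{GF}(l)[\mathbb{Z}_p]\otimes\mathrm{GF}(l)[\mathbb{Z}_q]$ with $\widehat{C_0}=D_p\otimes D_q+N_p\otimes N_q$, and reducing everything to the two parities ($(p-1)/2$ even, $(q-1)/2$ odd) is a clean and complete route: the rescaling by $a^{-1}$ together with $-1\in C_1$ correctly converts the unit coefficients into the cyclotomic numbers of Lemma~\ref{Lemma 6}, and your parity bookkeeping for the degenerate $R,P,Q$ coefficients (fixed-point-free swap for the squares, the even factor $(p-1)/2$ versus the odd factor $(q-1)/2$ for $A_1A_2$) checks out; for instance with $p=5$, $q=3$ one gets $\widehat{C_0}\widehat{C_1}=4I_n+Q_n+A_1+A_2$ over $\mathbb{Z}$, as your argument predicts. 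Two caveats. First, you derive $-1\in C_1$ from the Legendre symbols and call it ``the relevant instance of Lemma~\ref{Lemma 5}'', but it is the \emph{opposite} of what that lemma literally gives here: $(p-1)(q-1)/4=2\omega(2\omega'+1)$ is even, so the lemma as printed would put $-1$ in $C_0$. Your direct computation is the correct one (for $p=5$, $q=3$ one has $C_0=\{1,2,4,8\}$ and $-1=14\in C_1$ while $(p-1)(q-1)/4=2$ is even; the paper's Lemma~\ref{Lemma 5} appears to have its two cases transposed), so keep the Legendre-symbol derivation and drop the citation. Second, your restriction to characteristic two is not cosmetic but necessary: over $\mathbb{Z}$ one has, e.g., $P_n^2=(q-1)I_n+(q-2)P_n$ and $Q_nA_1=\tfrac{p-1}{2}P_n+(\tfrac{p-1}{2}-1)A_1+\tfrac{p-1}{2}A_2$, so several of the displayed identities fail over $\mathrm{GF}(3)$. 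The lemma as stated carries no such hypothesis; since the paper only applies it over $\mathrm{GF}(2)$ and $\mathrm{GF}(4)$ your proof covers every use made of it, but the hypothesis should be stated explicitly rather than left implicit.
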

\begin{proof}
The proof is straightforward from the definitions of these sets.
\end{proof}

\begin{lemma}\label{Lemma 2.9}
If $p$ is a prime of the form $4\omega+3$ and $q$ is a prime of the form $4\omega^{'}+1$. Then
\begin{equation*}
\begin{aligned}
A_1=&A_2^\top, A_2=A_1^\top, P_n=P_n^\top ,Q_n=Q_n^\top,\\
P_nA_1=&A_1P_n=A_1,\quad P_nA_2=A_2P_n=A_2,\\
Q_nA_1=&A_1Q_n=P_n+A_2,\quad Q_nA_2=A_2Q_n=P_n+A_1,\\
P_nQ_n=&Q_nP_n=A_1+A_2,P_n^2=P_n, Q_n^2=Q_n,\\
A_1^2=&(1,0)A_1+(0,1)A_2,\\
 A_2^2=&(0,1)A_1+(1,0)A_2,\\
A_1A_2=&P_n+(0,0)A_1+(1,1)A_2,\\
A_2A_1=&P_n+(1,1)A_1+(0,0)A_2.
\end{aligned}
\end{equation*}
\end{lemma}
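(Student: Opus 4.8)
The plan is to regard each of these matrices as an element of the group algebra of $\mathbb{Z}_n$. A matrix whose $(i,j)$ entry depends only on $j-i \bmod n$ is a circulant, and for the indicator matrix $M_S$ of a set $S\subseteq\mathbb{Z}_n$ (so $(M_S)_{ij}=1$ iff $j-i\in S$) one has
\begin{equation*}
(M_SM_T)_{ij}=\#\{(a,b):a\in S,\ b\in T,\ a+b\equiv j-i \pmod{n}\}.
\end{equation*}
Hence every product on the left is controlled by the convolution count $N_{S,T}(w)=\#\{a\in S,\,b\in T:a+b=w\}$, and because $\mathbb{Z}_n$ is abelian all the matrices commute; in particular the reversed-order identities ($A_1P_n=P_nA_1$, $A_2A_1$ against $A_1A_2$, and so on) need no separate argument. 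I would fix the Chinese Remainder isomorphism $\mathbb{Z}_n\cong\mathbb{Z}_p\times\mathbb{Z}_q$, under which $P=\{0\}\times\mathbb{Z}_q^{*}$, $Q=\mathbb{Z}_p^{*}\times\{0\}$, $R=\{(0,0)\}$, and $C_0,C_1$ are the units separated by the parities of the two discrete logarithms (equivalently, by the two quadratic characters). All coefficient identities are to be read modulo $2$, the setting of the intended codes over \rm{GF}$(2)$ and \rm{GF}$(4)$: one computes each convolution over $\mathbb{Z}$ and reduces, which is what returns the displayed right-hand sides.

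The transpose relations come first and fastest. Since $(M_S)^{\top}=M_{-S}$, I only need $-P=P$ and $-Q=Q$ (immediate from $-jp\equiv(q-j)p$ and $-jq\equiv(p-j)q$) together with the coset of $-1$. For the present residues $\tfrac{p-1}{2}$ is odd and $\tfrac{q-1}{2}$ is even, so $-1$ lies in the non-identity coset (this is Lemma~\ref{Lemma 5}); thus $-C_0=C_1$, $-C_1=C_0$, giving $P_n^{\top}=P_n$, $Q_n^{\top}=Q_n$, $A_1^{\top}=A_2$, $A_2^{\top}=A_1$. The ``axis'' products $P_nQ_n$, $P_n^2$, $Q_n^2$, $P_nA_i$, $Q_nA_i$ are then purely coordinatewise: writing $w=(w_p,w_q)$, a representation $a+b=w$ pins one coordinate and leaves a single free coordinate ranging over $\mathbb{Z}_p^{*}$ or $\mathbb{Z}_q^{*}$, so $N_{S,T}(w)$ factors as an elementary residue count. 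Reducing modulo $2$ uses that $p,q$ are odd and, decisively, the asymmetry $\tfrac{p-1}{2}$ odd versus $\tfrac{q-1}{2}$ even; this is exactly what makes $P_nA_1=A_1$ but $Q_nA_1=P_n+A_2$.

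The substantive step is the four products $A_1^2,A_2^2,A_1A_2,A_2A_1$. For a unit target $w\in C_k$ I would divide $a+a'=w$ by $w$ to reduce to $a+a'=1$ with $a\in C_{i-k}$, $a'\in C_{j-k}$, then use $-1\in C_1$ to turn the sum into a difference and identify the count with a generalized cyclotomic number, schematically $N(w)=(\,i-k+1,\ j-k\,)$ with indices modulo $2$. Feeding in Lemma~\ref{Lemma 6} (we are in the even case, where $(0,0)=(1,0)=(1,1)$ and $(0,1)$ is the outlier) reproduces the $A_1,A_2$ coefficients and makes the two orders agree, since $(0,0)=(1,1)$. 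What remains, and what I expect to be the real obstacle, is the behaviour on $P$, $Q$ and $R$. For $w\in P$ the equation $b=w-a$ splits into a parity condition modulo $q$ and a free choice modulo $p$, whence $N(w)=\tfrac{p-1}{2}\cdot(\text{a residue count mod }q)$; for $w\in Q$ the roles reverse with prefactor $\tfrac{q-1}{2}$; and for $w=0$ the count is a multiple of $e=\tfrac{(p-1)(q-1)}{2}$.

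So the delicate point is the \emph{parity}, uniform in $w$, of these auxiliary single-prime counts. Rescaling $w\mapsto1$ collapses them to ordinary order-two cyclotomic numbers modulo $q$ (resp. modulo $p$)---counting whether $t$ and $1-t$ lie in the same or in opposite quadratic classes---so the task is to show these parities are constant as $w$ runs through $P$ or through $Q$. Granting that, the conclusion is clean: because $\tfrac{p-1}{2}$ is odd a $P$-contribution can survive modulo $2$, which is the solitary $P_n$ in $A_1A_2$ and $A_2A_1$, whereas $\tfrac{q-1}{2}$ even kills every $Q$-contribution, and the diagonal term, being a multiple of the even integer $e$, vanishes---explaining why $A_1^2$ and $A_2^2$ carry no $P_n$, $Q_n$ or diagonal part. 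Assembling the unit-class values from Lemma~\ref{Lemma 6} with these parity computations gives each stated identity, and commutativity then delivers the reversed products for free.
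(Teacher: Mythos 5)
Your overall architecture is sound and, frankly, far more explicit than the paper's own justification, which consists of the single sentence ``the proof is straightforward from the definitions'': the circulant/convolution dictionary, the CRT coordinates, the observation that the identities must be read modulo $2$ (they are indeed false in odd characteristic, e.g.\ $P_n^2=(q-2)P_n+(q-1)I_n$ over $\mathbb{Z}$), commutativity disposing of the reversed products, and the reduction of the unit-class coefficients of $A_iA_j$ to the numbers $(i-k+1,\,j-k)$ all check out, after which Lemma~\ref{Lemma 6} delivers the displayed $A_1,A_2$ coefficients. But the step you yourself flag as ``the real obstacle'' is left undone, and it is not merely the uniformity in $w$ (that part is easy: $C_0$ acts transitively by multiplication on each of $P$, $Q$, $C_0$, $C_1$, so every convolution count is constant on each class). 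What actually has to be computed is the \emph{value} of the parity. Concretely, for $w\in P$ one finds $N_{C_i,C_j}(w)=\tfrac{p-1}{2}\cdot\bigl(\text{a sum of two order-two cyclotomic counts mod }q\bigr)$, and since $\tfrac{p-1}{2}$ is odd everything hinges on whether that sum is the ``opposite-class'' count $\tfrac{q-1}{2}$ (even for $q\equiv 1\pmod 4$) or the ``same-class'' count $\tfrac{q-3}{2}$ (odd). Tracking the constraint $b_p=-a_p$ through the parity description of $C_0,C_1$ shows the first case occurs for $A_1^2$ and $A_2^2$ and the second for $A_1A_2$ and $A_2A_1$; this is exactly what puts the solitary $P_n$ into $A_1A_2$ and keeps it out of $A_1^2$, and it is the one genuinely non-formal computation in the whole lemma. ``A $P$-contribution \emph{can} survive'' is not a proof; you must decide where it does.

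A second, smaller but real, inconsistency: you justify $-1\in C_1$ by appealing to Lemma~\ref{Lemma 5}, but for $p\equiv 3$ and $q\equiv 1\pmod 4$ the quantity $(p-1)(q-1)/4$ is \emph{even}, so Lemma~\ref{Lemma 5} as printed would place $-1$ in $C_0$ and destroy $A_1^\top=A_2$. Your independent parity argument is the correct one: $-1\in C_0$ requires a simultaneous solution of $s\equiv\tfrac{p-1}{2}\pmod{p-1}$ and $s\equiv\tfrac{q-1}{2}\pmod{q-1}$, which fails precisely because $\tfrac{p-1}{2}$ is odd while $\tfrac{q-1}{2}$ is even, whence $-1\in C_1$. (Check $p=3$, $q=5$: $-1=14\in C_1=\{7,11,13,14\}$ while $(p-1)(q-1)/4=2$ is even, so the two cases of Lemma~\ref{Lemma 5} are interchanged in the paper.) Rely on the direct argument, not the citation.
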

\begin{proof}
The proof is straightforward from the definitions of these sets.
\end{proof}

\indent The following result will be needed in the sequel.
\begin{equation*}
\begin{aligned}
&(m_0I_n + m_1P_n + m_2Q_n + m_3A_1 + m_4A_2)\\
&(m_0I_n+ m_1P_n + m_2Q_n + m_3A_1 + m_4A_2)^{\top}\\
=&a_0I_n + a_1P_n + a_2Q_n + a_3A_1 + a_4A_2.
\end{aligned}
\end{equation*}
\begin{lemma}\label{Lemma 2.11}
If $p=4\omega+1$ and $q$ =$4\omega^{'}+3$. Then
\begin{equation*}
\begin{aligned}
a_0&=m_0^{2},\\
a_1&=m_1^{2}+2m_0m_1,\\
a_2&=m_2^{2}+m_3^{2}+m_4^{2}+2m_0m_2+2m_1(m_3+m_4),\\
a_3&=a_4\\
&=(m_0+m_1+m_2)(m_3+m_4)+2m_1m_2\\
&\quad +(0,0)(m_3^{2}+m_4^{2})+((1,0)+(0,1))m_3m_4.
\end{aligned}
\end{equation*}

If $p=4\omega+3$ and $q$ =$4\omega^{'}+1$. Then
\begin{equation*}
\begin{aligned}
a_0&=m_0^{2},\\
a_1&=m_1^{2}+m_3^{2}+m_4^{2}+2m_0m_1+2m_2(m_3+m_4),\\
a_2&=m_2^{2}+2m_0m_2, \\
a_3&=a_4\\
&=(m_0+m_1+m_2)(m_3+m_4)+2m_1m_2\\
&\quad +((1,0)+(0,1))m_3m_4+(0,0)(m_3^{2}+m_4^{2}).
\end{aligned}
\end{equation*}
\end{lemma}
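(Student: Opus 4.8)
The plan is to expand the product $M M^\top$ directly, where $M = m_0 I_n + m_1 P_n + m_2 Q_n + m_3 A_1 + m_4 A_2$, and to read off each coefficient $a_i$ from the multiplication table supplied by Lemma~\ref{Lemma 2.10} (in the first case) or Lemma~\ref{Lemma 2.9} (in the second). First I would rewrite $M^\top$ in the same basis $\{I_n,P_n,Q_n,A_1,A_2\}$: since $I_n,P_n,Q_n$ are symmetric and $A_1^\top=A_2$, $A_2^\top=A_1$, transposition merely interchanges the last two coefficients, so $M^\top = m_0 I_n + m_1 P_n + m_2 Q_n + m_4 A_1 + m_3 A_2$. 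Then $M M^\top$ becomes a sum of $25$ products of basis matrices, each weighted by a product of two of the $m_i$.

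Second, I would substitute every product $XY$ of basis matrices using the relations of the appropriate lemma and collect the coefficient of each of $I_n,P_n,Q_n,A_1,A_2$. The five matrices have pairwise disjoint supports, corresponding to the partition of the index set into $R,P,Q,C_0,C_1$, so they are linearly independent and the coefficients are well defined. The identity $I_n$ arises only from $m_0 I_n \cdot m_0 I_n$, giving $a_0 = m_0^2$ at once. For $a_1$ and $a_2$ one sorts which products return $P_n$ or $Q_n$: in the first case $P_n$ comes from $I_nP_n,\ P_nI_n,\ P_n^2$, while $Q_n$ comes from $I_nQ_n,\ Q_nI_n,\ Q_n^2$ together with $P_nA_i,\ A_iP_n,\ A_1A_2,\ A_2A_1$; summing the weights reproduces the stated $a_1,a_2$, and the second case is identical with the roles of $P_n$ and $Q_n$ exchanged as in Lemma~\ref{Lemma 2.9}.

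Third, for the off-diagonal coefficient I would exploit symmetry. Because $(MM^\top)^\top = MM^\top$, and transposition fixes $I_n,P_n,Q_n$ while swapping $A_1,A_2$, linear independence of the basis forces $a_3 = a_4$, so only one of them need be computed. Collecting the products that yield $A_1$ — namely $I_nA_1,A_1I_n,P_nA_2,A_2P_n,Q_nA_1,A_1Q_n,P_nQ_n,Q_nP_n,A_1^2,A_2^2,A_1A_2,A_2A_1$ — and inserting the cyclotomic-number entries gives
\begin{equation*}
a_3 = (m_0+m_1+m_2)(m_3+m_4) + 2m_1m_2 + ((1,0)+(0,1))m_3m_4 + (0,0)m_3^2 + (1,1)m_4^2 .
\end{equation*}

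The one step that is not pure bookkeeping is to reconcile this with the symmetric form $(0,0)(m_3^2+m_4^2)$ appearing in the statement: the two agree for all $m_3,m_4$ exactly when $(0,0)=(1,1)$. I would therefore verify that $(p-1)(q-1)/4$ is even in both cases — one of $p-1,q-1$ is divisible by $4$ while the other is $\equiv 2 \pmod 4$ — so that Lemma~\ref{Lemma 6} yields $(0,0)=(1,0)=(1,1)$. This simultaneously collapses the $a_3$ formula to the asserted one and is consistent with the equality $a_3=a_4$ obtained from symmetry; the remaining work is routine term collection.
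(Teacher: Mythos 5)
Your proposal is correct and is exactly the computation the paper leaves implicit (Lemma~\ref{Lemma 2.11} is stated without proof; the evident route is to expand $MM^\top$ using the multiplication tables of Lemmas~\ref{Lemma 2.10} and~\ref{Lemma 2.9}, which is what you do). You also correctly flag the one genuinely non-routine point that the paper glosses over: the raw expansion gives $(0,0)m_3^2+(1,1)m_4^2$ rather than $(0,0)(m_3^2+m_4^2)$, and these coincide only because $(p-1)(q-1)/4$ is even under both hypotheses, so that Lemma~\ref{Lemma 6} gives $(0,0)=(1,1)$.
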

For convenience, we denote
\begin{equation*}\label{m}
\begin{aligned}
&\overrightarrow{m}:=(m_0, m_1, m_2, m_3, m_4)\in GF(l)^5,\\
&D_0(\overrightarrow m):=a_0,  D_1(\overrightarrow m):=a_1,\\
&D_2(\overrightarrow m):=a_2,  D_3(\overrightarrow m):=a_3.
\end{aligned}
\end{equation*}
\begin{definition}\label{definition 1}
\cite{Macwilliams1977The}Let $GP_n(R)$ and $GB_n(\alpha, R)$ be codes with generator matrices of the form
$$\left(I_n,R\right)$$
$$
	\left(
	\begin{array}{ccccc}
	&& \alpha &1& \cdots 1\\
	&& -1 &&  \\
	&I_n& \vdots  && R  \\
	&& -1 &&  \\
	\end{array}
	\right),
$$
where $\alpha \in$ \rm{GF}$(l)$ and $R$ is an $n\times n$ circulant matrix. The codes $GP_n(R)$ and $ GB_n(\alpha, R)$ are called \textbf{pure double circulant codes} and \textbf{bordered double circulant codes}, respectively.
\end{definition}
For convenience, Let $GP_n(\overrightarrow m)$ be generator matrix of $GP_n(R)$ and $GB_n(\alpha ,\overrightarrow m)$ be generator matrix of $GB_n(\alpha, R)$.
\begin{equation*}
\begin{aligned}
&GP_n(\overrightarrow m)\\
&\quad :=GP_n(m_0I_n + m_1P_n + m_2Q_n + m_3A_1 + m_4A_2),\\
&GB_n(\alpha ,\overrightarrow m)\\
&\quad :=GB_n(\alpha ,m_0I_n + m_1P_n + m_2Q_n + m_3A_1 + m_4A_2).
\end{aligned}
\end{equation*}
\indent The main theorem of this section is:
\begin{theorem}\label{Theorem 2:}
 Let $\alpha \in$ \rm{GF}$(l)$ and $\overrightarrow m \in$ \rm{GF}$(l)^5$. Then
\begin{enumerate}[1)]
\item the code with generator matrix $GP_n(\overrightarrow m)$ is self-dual over \rm{GF}$(l)$ if and only if the following holds:
   \begin{enumerate}[a)]
       \item $D_0(\overrightarrow m)=-1,$
        \item $D_1(\overrightarrow m)=0,$
       \item $D_2(\overrightarrow m)=0,$
       \item $D_3(\overrightarrow m)=0,$
       \item $D_4(\overrightarrow m)=0.$
   \end{enumerate}
\item the code with generator matrix $GB_n(\alpha, \overrightarrow m)$ is self-dual over \rm{GF}$(l)$ if and only if the following holds:
    \begin{enumerate}[a)]
       \item $\alpha +n=-1,$
       \item $-\alpha+ m_0 +(p-1)m_1+(q-1)m_2\\ \qquad +\frac{(p-1)(q-1)}{2}(m_3 +m_4)=0,$
       \item $D_0(\overrightarrow m)=-2,$
       \item $D_1(\overrightarrow m)=-1,$
       \item $D_2(\overrightarrow m)=-1,$
       \item $D_3(\overrightarrow m)=-1.$
    \end{enumerate}
\end{enumerate}
\end{theorem}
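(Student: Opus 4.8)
The plan is to reduce both statements to the single matrix identity $GG^\top = 0$ over $\mathrm{GF}(l)$ and then read off the scalar conditions block by block. The starting observation, used once and for all, is that any generator matrix of the shape $(I\mid\cdot)$ has full row rank, so the code it generates has dimension exactly half its length. For such a code, self-orthogonality and self-duality coincide (since $\dim C = \dim C^\top$), and self-orthogonality is precisely $GG^\top = 0$. This disposes of the dimension bookkeeping and lets me concentrate on a matrix computation whose steps are all reversible, so each statement will come out as an ``if and only if''.

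For part 1), put $R = m_0I_n + m_1P_n + m_2Q_n + m_3A_1 + m_4A_2$ and $G = GP_n(\overrightarrow m) = (I_n, R)$. Then $GG^\top = I_n + RR^\top$, so self-duality is equivalent to $RR^\top = -I_n$. I then invoke the displayed identity preceding Lemma~\ref{Lemma 2.11}, which gives $RR^\top = a_0I_n + a_1P_n + a_2Q_n + a_3A_1 + a_4A_2$ with $a_i = D_i(\overrightarrow m)$ (and $a_4 = a_3$ by Lemma~\ref{Lemma 2.11}, so conditions d) and e) coincide). The key structural fact is that $I_n, P_n, Q_n, A_1, A_2$ are the indicator matrices of the diagonal and of the sets $P, Q, C_0, C_1$, which together with $\{0\}$ partition $\mathbb{Z}_n$; hence these five matrices have pairwise disjoint supports and are linearly independent over $\mathrm{GF}(l)$. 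Matching coefficients in $RR^\top = -I_n$ against this basis forces $a_0 = -1$ and $a_1 = a_2 = a_3 = a_4 = 0$, which are exactly a)--e).

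For part 2), I write the bordered matrix in $2\times 2$ block form as $GB_n(\alpha,\overrightarrow m) = (I_{n+1}\mid B)$ with
\[
B = \begin{pmatrix} \alpha & \mathbf{1}^\top \\ -\mathbf{1} & R \end{pmatrix},
\]
where $\mathbf{1}$ is the all-ones column of length $n$. Self-duality is again equivalent to $BB^\top = -I_{n+1}$, which I expand blockwise. The $(1,1)$ entry is the self-inner-product of the border row and yields condition a). The two off-diagonal blocks equal $R\mathbf{1} - \alpha\mathbf{1}$ and its transpose; since every row of $R$ has the same sum, their vanishing is the single scalar equation obtained by computing that row sum from the cardinalities $|P|$, $|Q|$ and $|C_0| = |C_1| = (p-1)(q-1)/2$, which is condition b). Finally the $(2,2)$ block is $J_n + RR^\top$; substituting $J_n = C_n(1,1,1,1,1) = I_n + P_n + Q_n + A_1 + A_2$ and the expansion of $RR^\top$ from part 1), then matching coefficients against the same independent basis, gives $1 + a_0 = -1$ and $1 + a_i = 0$ for $i = 1,2,3$, i.e.\ conditions c)--f).

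The routine part is the block arithmetic; the step demanding the most care is the $(2,2)$ block, where the all-ones matrix $J_n$ contributed by the border column must be folded into the coefficients before comparing with the basis, and the row-sum computation feeding condition b), which depends on the exact sizes of $P$, $Q$, $C_0$, $C_1$. Once the decomposition $J_n = I_n + P_n + Q_n + A_1 + A_2$ and these cardinalities are in place, each of the six scalar conditions falls out of a single block, and because the full-rank reduction to $GG^\top = 0$, the block expansion, and the coefficient matching are all equivalences, the stated characterizations hold in both directions.
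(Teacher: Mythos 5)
Your argument is correct and follows essentially the same route as the paper: the paper's proof simply records the products $GP_n(\overrightarrow m)GP_n(\overrightarrow m)^{\top}$ and $GB_n(\alpha,\overrightarrow m)GB_n(\alpha,\overrightarrow m)^{\top}$ in the basis $I_n,P_n,Q_n,A_1,A_2$ (with the border contributing the corner entry, the constant $S$, and $J_n$) and reads off the conditions, which is exactly your block computation with the supporting details (full row rank, disjoint supports of the five basis matrices, $J_n=I_n+P_n+Q_n+A_1+A_2$) made explicit. One minor caveat: the honest $(1,1)$ entry is $\alpha^{2}+n$ rather than $\alpha+n$, so condition a) as you (and the paper) state it is literally correct only when $\alpha^{2}=\alpha$, which does hold in all of the paper's applications since $\alpha=0$ there.
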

\begin{proof}
The result follows from
\begin{equation*}
\begin{aligned}
&GP_n(\overrightarrow m)GP_n(\overrightarrow m)^{\top}\\
=&I_n+D_0I_n + D_1P_n + D_2Q_n + D_3A_1 + D_4A_2,\\
\textrm{and}\\
&GB_n(\alpha, \overrightarrow m)GB_n(\alpha, \overrightarrow m)^{\top}\\
=&I_{n+1}+
	\left(
	\begin{array}{cccc}
	 \alpha+n   &S  \cdots S\\
	 S &  \\
	 \vdots  & X  \\
	 S &  \\
	\end{array}
	\right),
\end{aligned}
\end{equation*}
where $X=J_n+D_0(\overrightarrow m)I_n+D_1(\overrightarrow m)P_n+D_2(\overrightarrow m)Q_n+D_3(\overrightarrow m)A_1 + D_4(\overrightarrow m)A_2$ and $S =-\alpha+ m_0 +(p-1)m_1+(q-1)m_2+ \frac{(p-1)(q-1)}{2}(m_3 +m_4)$.
\end{proof}

\section{Double circulant self-dual codes over fields}\label{section 3}

In this section, we construct two infinite families of self-dual codes over GF(2) and GF(4). As a preparation, we have the following two lemmas.
\begin{lemma}\label{lemma3.1}
Let $p=4\omega +3$ and $q=4\omega^{'}+1$ or $p=4\omega +1$ and $q=4\omega^{'}+3$. Then
\begin{enumerate}[1)]
\item If $\omega+\omega^{'}$ is even, which means $\frac{p+q}{4}$ is odd, \\
$(0,0)\equiv (1,0)\equiv (1,1)\equiv 0\pmod{2},  (0,1) \equiv 1\pmod{2}$.
\item If $\omega+\omega^{'}$ is odd, which means $\frac{p+q}{4}$ is even,  \\
$(0,0)\equiv (1,0)\equiv (1,1)\equiv 1\pmod{2}, (0,1) \equiv 0\pmod{2}$.
\end{enumerate}
\end{lemma}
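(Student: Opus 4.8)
The plan is to reduce everything to the explicit formulas for the cyclotomic numbers in Lemma~\ref{Lemma 6} and then carry out the parity bookkeeping. The decisive first observation I would establish is that under either hypothesis of the lemma the quantity $(p-1)(q-1)/4$ is \emph{even}, so that only the first (even) branch of Lemma~\ref{Lemma 6} is ever relevant. Indeed, when $p=4\omega+1,\ q=4\omega'+3$ one computes $(p-1)(q-1)/4 = 2\omega(2\omega'+1)$, and when $p=4\omega+3,\ q=4\omega'+1$ one computes $(p-1)(q-1)/4 = 2(2\omega+1)\omega'$; both carry an explicit factor of $2$. Consequently I may use throughout
\[
(0,0)=(1,0)=(1,1)=\frac{(p-2)(q-2)+1}{4},\qquad (0,1)=\frac{(p-2)(q-2)-3}{4}.
\]

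Next I would record the elementary but structurally important fact that these two values differ by exactly $1$: setting $N:=\frac{(p-2)(q-2)+1}{4}$, one has $(0,1)=N-1$, so $(0,0),(1,0),(1,1)$ all share one parity while $(0,1)$ takes the opposite parity. This already fixes the \emph{shape} of both conclusions and reduces the entire statement to determining $N \bmod 2$.

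The third step is the parity computation for $N$. Expanding $(p-2)(q-2)$ in each case, I would show $(p-2)(q-2)+1 = 4(4\omega\omega'+\omega-\omega')$ in the first case and $(p-2)(q-2)+1 = 4(4\omega\omega'-\omega+\omega')$ in the second, whence $N = 4\omega\omega' \pm(\omega-\omega')$ and therefore $N \equiv \omega+\omega' \pmod 2$ in either case. Substituting back: if $\omega+\omega'$ is even then $N\equiv 0$, giving $(0,0)\equiv(1,0)\equiv(1,1)\equiv 0$ and $(0,1)=N-1\equiv 1 \pmod 2$; if $\omega+\omega'$ is odd the parities flip, giving the second alternative. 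Finally, since $p+q = 4(\omega+\omega'+1)$ in both cases, we have $\frac{p+q}{4}=\omega+\omega'+1$, so ``$\omega+\omega'$ even'' coincides with ``$\frac{p+q}{4}$ odd'', which matches the reformulation stated in the lemma.

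There is no genuine obstacle here; the argument is a direct computation once the right branch of Lemma~\ref{Lemma 6} is selected. The only place where an error could creep in — and hence the step I would verify most carefully — is the initial claim that the even branch always applies, since selecting the wrong branch would invert every parity conclusion. After that, the sole remaining care is sign-tracking in the expansions of $(p-2)(q-2)$, which is routine modulo $2$ since $\pm(\omega-\omega')\equiv \omega+\omega'\pmod 2$.
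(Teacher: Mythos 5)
Your proposal is correct and follows essentially the same route as the paper: select the even branch of Lemma~\ref{Lemma 6} and compute the parities of the cyclotomic numbers directly in terms of $\omega+\omega'$. The only difference is that you explicitly verify that $(p-1)(q-1)/4$ is even under both hypotheses (a precondition the paper's proof uses silently), which is a worthwhile addition but not a different argument.
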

\begin{proof}\label{proof3.1}
Let $p=4\omega +3$ and $q=4\omega^{'}+1$. Then
\begin{equation*}
\begin{aligned}
 (0,0)=&\frac{(p-2)(q-2)+1}{4}\\
 =&\frac{(4\omega +1)(4\omega^{'} -1)+1}{4}\\
 =&\frac{16\omega\omega^{'}+4\omega+4\omega^{'}}{4}\\
 =&4\omega\omega^{'}+\omega+\omega^{'}\\
 \equiv&\omega^{'}+\omega\pmod{2}
 \end{aligned}
\end{equation*}
\begin{equation*}
\begin{aligned}
(0,1)=&\frac{(p-2)(q-2)-3}{4}\\
 =&\frac{(4\omega +1)(4\omega^{'} -1)-3}{4}\\
 =&\frac{16\omega\omega^{'}-4\omega+4\omega^{'}-4}{4}\\
 =&4\omega\omega^{'}-\omega+\omega^{'}-1\\
 \equiv&\omega^{'}+\omega+1\pmod{2}
 \end{aligned}
\end{equation*}
The proof for the condition of $p=4\omega +1$ and $q=4\omega^{'}+3$ is similar.
\end{proof}

\subsection{Self-Dual Codes Over GF(2)}

\begin{theorem}\label{lemma self_dual}
Let $\frac{(p-1)(q-1)}{4}$ be even and $\frac{p+q}{4}$ be odd. Then the following holds:
\begin{enumerate}[1)]
\item  $p=4\omega +1$ and $q=4\omega^{'}+3$\\
The codes with generator matrices $GP_n(1, 0, 1, 0, 1)$ and $GP_n(1,  0,1, 1, 0)$ over \rm F(2) are self-dual of length $2n$.\\
The codes with generator matrices $GB_n(0, 0, 1, 0, 1, 0)$ and $GB_n(0, 0, 1, 0, 0, 1)$ over \rm GF(2) are self-dual of length $2n+2$.
\item  $p=4\omega +3$ and $q=4\omega^{'}+1$\\
The codes with generator matrices $GP_n(1, 1, 0, 0, 1)$ and $GP_n(1, 1,0, 1, 0)$ over \rm GF(2) are self-dual of length $2n$.\\
The codes with generator matrices $GB_n(0, 0, 0, 1, 1, 0)$ and $GB_n(0, 0, 0, 1, 0, 1)$ over \rm GF(2) are self-dual of length $2n+2$.
\end{enumerate}
\end{theorem}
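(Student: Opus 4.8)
The plan is to verify each of the eight listed generator matrices against the self-duality criterion of Theorem~\ref{Theorem 2:}, working throughout in GF$(2)$. First I would record the simplifications special to characteristic two: every term carrying a factor $2$ in the formulas of Lemma~\ref{Lemma 2.11} vanishes, and $-1 = 1$, $-2 = 0$. Next I would reduce the generalized cyclotomic numbers. Since $\frac{(p-1)(q-1)}{4}$ is even we are in the first branch of Lemma~\ref{Lemma 6}, and since $\frac{p+q}{4}$ is odd, part~1) of Lemma~\ref{lemma3.1} applies, giving $(0,0)\equiv(1,0)\equiv(1,1)\equiv 0$ and $(0,1)\equiv 1\pmod 2$. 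Consequently, over GF$(2)$, $(0,0)=0$ and $(1,0)+(0,1)=1$, and these are the only cyclotomic quantities occurring in $a_3$. Combining these facts, the expression for $D_3(\overrightarrow m)=a_3$ from Lemma~\ref{Lemma 2.11} collapses, in both orderings of $(p,q)$, to the single formula
\begin{equation*}
D_3(\overrightarrow m)=(m_0+m_1+m_2)(m_3+m_4)+m_3m_4 .
\end{equation*}

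With these reductions in hand the pure-circulant cases are immediate substitutions. For each $\overrightarrow m$ I would use the branch of Lemma~\ref{Lemma 2.11} matching the stated parities of $p$ and $q$ and check that $D_0(\overrightarrow m)=1$ and $D_1(\overrightarrow m)=D_2(\overrightarrow m)=D_3(\overrightarrow m)=0$; these are exactly conditions a)--e) of part~1) of Theorem~\ref{Theorem 2:} once $-1$ is read as $1$. For instance, for $GP_n(1,0,1,0,1)$ with $p=4\omega+1$, $q=4\omega'+3$ one gets $D_0=m_0^2=1$, $D_1=m_1^2=0$, $D_2=m_2^2+m_3^2+m_4^2=1+0+1=0$, and $D_3=(1+0+1)(0+1)+0\cdot1=0$; the other three pure matrices are checked identically.

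For the four bordered matrices I would verify the boundary conditions a)--b) of part~2) of Theorem~\ref{Theorem 2:} before turning to the $D_i$. Since $n=pq$ is a product of odd primes, $n$ is odd, so with $\alpha=0$ condition a) reads $\alpha+n=1=-1$. For condition b) the key observation is that $p-1$, $q-1$ and $\frac{(p-1)(q-1)}{2}$ are all even: the first two because $p,q$ are odd, and the last because $\frac{(p-1)(q-1)}{2}=2\cdot\frac{(p-1)(q-1)}{4}$ is twice an integer. As $m_0=0$ and $\alpha=0$ in every bordered case, the quantity $S$ of Theorem~\ref{Theorem 2:} reduces to $0$ modulo $2$, giving condition b). It then remains to check $D_0(\overrightarrow m)=0$ (that is, $-2$) and $D_1(\overrightarrow m)=D_2(\overrightarrow m)=D_3(\overrightarrow m)=1$ (that is, $-1$) by the same substitutions.

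I do not expect a genuine obstacle here; the one thing requiring care is the \textbf{bookkeeping}: selecting the correct branch of Lemma~\ref{Lemma 2.11} for each ordering of $(p,q)$, and consistently tracking the parities of $p-1$, $q-1$, $\frac{(p-1)(q-1)}{2}$ and of the cyclotomic numbers so that the reductions above are applied correctly. Once those reductions are set up, every one of the eight verifications is a short evaluation of a quadratic form over GF$(2)$.
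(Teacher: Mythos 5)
Your proposal is correct and follows essentially the same route as the paper: reduce the cyclotomic numbers modulo $2$ via Lemma~\ref{lemma3.1}, then verify conditions a)--e) (resp.\ a)--f)) of Theorem~\ref{Theorem 2:} by direct substitution into the formulas of Lemma~\ref{Lemma 2.11}. Your only addition is pre-collapsing $D_3$ to $(m_0+m_1+m_2)(m_3+m_4)+m_3m_4$ over GF$(2)$, which is a harmless (and in fact cleaner) streamlining of the paper's case-by-case evaluation.
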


\begin{table}[!hbpt]
\caption{SOME CODES OVER GF(2)} \label{t01}
\begin{center}
\begin{tabular}{|p{1.5cm}<{\centering}|p{0.15cm}<{\centering}|p{0.15cm}<{\centering}|p{2.75cm}<{\centering}|p{1.5cm}<{\centering}|}
\hline\hline
Code &p  &q &Construction& Comments\\
\hline
$[70,35,10]$  &5 &7 &$GP_{35}(1, 0, 1, 0, 1)$ &almost optimal\\
\hline
$[72,36,12]$ &5 &7  &$GB_{36}(0, 0, 1, 0, 1, 0)$ &highest know\\
\hline
\end{tabular}
\end{center}
\end{table}

\begin{proof}
If $p=4\omega +1$ and $q=4\omega^{'}+3$. Then
\begin{equation*}
\begin{aligned}
\alpha +&n= 4(4\omega\omega^{'}+\omega+3\omega^{'})+3 \equiv 1\pmod{2}, \\
-\alpha&+ m_0+(p-1)m_1+(q-1)m_2\\
+&\frac{(p-1)(q-1)}{2}(m_3 +m_4)\equiv 0\pmod{2},
\end{aligned}
\end{equation*}
\begin{equation*}
\begin{aligned}
&D_0(0,1,0,1,0)\equiv 0\pmod{2},\\
&D_1(0,1,0,1,0)\\
=&D_2(0,1,0,1,0)\equiv 1\pmod{2},\\
&D_3(0,1,0,1,0)\\
\!=&(0,0)(m_3^{2}+m_4^{2})+((1,0)+(0,1))m_3m_4\\\!
=&0\times(0+1)+(1+0)\equiv 1\pmod{2}.
\end{aligned}
\end{equation*}
By Theorem 2, the code with generator matrix $GB_n(0,0,1,0,1,0)$ over \rm GF(2) is self-dual of length $2n+2$. The proof that the code with generator matrix $GB_n(0,0,0,1,0,1)$ over GF(2) is self-dual is similar. \\
Under the same conditions, we have
\begin{equation*}
\begin{aligned}
&D_0(1,0,1,0,1) \equiv 1\pmod{2},\\
&D_1(1,0,1,0,1)=D_2(1,0,1,0,1) \equiv 0\pmod{2},\\
\!&D_3(1,0,1,0,1)=(0,0)(m_3^{2}+m_4^{2})+((1,0)+(0,1))m_3m_4\\\!
=&0\times(0+1)+0\times(1+0) \equiv 0\pmod{2}.
\end{aligned}
\end{equation*}
 Thus the code with generator matrix $GP_n(1, 0, 1, 0, 1)$ over GF(2) is self-dual of length $2n$. The rest of this theorem can be proved similarly.
\end{proof}

[72, 36, 12] codes also can be obtain by $GB_{36}(0,0,0,$ $1,0,1)$ with $p=7$ and $q=5$ over GF(2). The structures of those codes are similar. The \textbf{highest know} means that the code meets the highest know minimum distance for its parameters.

\subsection{Self-Dual Codes Over GF(4)}

In this subsection, we give four constructions of infinite families of self-dual codes over GF(4). Let $u$ be the fixed primitive element of GF(4) satisfying $u^2+u+1=0$ , then we have the following results.
\begin{theorem}\label{lemma self_dual}
Let $\frac{(p-1)(q-1)}{4}$ be even and $\frac{p+q}{4}$ be even. Then the following holds:
\begin{enumerate}[1)]
\item  $p=4\omega +1$ and $q=4\omega^{'}+3$\\
The codes with generator matrices $GP_n(1, 0, 1,u+1, u)$ and $GP_n(1,$ $ 0, 1,u, u+1)$ over \rm GF(4) are self-dual of length $2n$. \\
 The codes with generator matrices $GB_n(0, 0, 1, 0,$ $ u, u+1)$ and $GB_n(0, 0,1, 0,u+1, u)$ over \rm GF(4) are self-dual of length $2n+2$.
\item  $p=4\omega +3$ and $q=4\omega^{'}+1$\\
The codes with generator matrices $GP_n(1,1,0,u+1,u)$ and $GP_n(1,1,0,u,u+1)$ over \rm GF(4) are self-dual of length $2n$. \\
The codes with generator matrices $GB_n(0,0,0,1,u+1,u)$ and $GB_n(0,$ $0,0,1,u,u+1)$ over \rm GF(4) are self-dual of length $2n+2$.
\end{enumerate}
\end{theorem}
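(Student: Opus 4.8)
The plan is to deduce every assertion directly from Theorem \ref{Theorem 2:}, which already reduces self-duality of $GP_n(\overrightarrow m)$ and $GB_n(\alpha,\overrightarrow m)$ to polynomial conditions on $D_0,\dots,D_4$ (together with, in the bordered case, the two scalar conditions on $\alpha+n$ and $S$). The essential first observation is that we work over $\mathrm{GF}(4)$, which has characteristic $2$: hence $-1=1$, $-2=0$, and every term carrying a factor $2$ in the formulas of Lemma \ref{Lemma 2.11} vanishes. Thus for $p=4\omega+1,\ q=4\omega'+3$ the expressions collapse to $D_0=m_0^2$, $D_1=m_1^2$, $D_2=m_2^2+m_3^2+m_4^2$, and $D_3=D_4=(m_0+m_1+m_2)(m_3+m_4)+(0,0)(m_3^2+m_4^2)+((1,0)+(0,1))m_3m_4$, with the $m_3^2+m_4^2$ contribution migrating from $D_2$ to $D_1$ in the case $p=4\omega+3,\ q=4\omega'+1$. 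Since the hypothesis is that $\tfrac{p+q}{4}$ is even, Lemma \ref{lemma3.1}(2) supplies the cyclotomic numbers modulo $2$, namely $(0,0)\equiv(1,0)\equiv(1,1)\equiv1$ and $(0,1)\equiv0$, so that $(0,0)=1$ and $(1,0)+(0,1)=1$ in $\mathrm{GF}(4)$.

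Next I would record the arithmetic facts in $\mathrm{GF}(4)=\{0,1,u,u+1\}$ that drive every computation: $u^2=u+1$, $(u+1)^2=u$, $u(u+1)=1$, and $u+(u+1)=1$. With these the pure codes reduce to direct substitution. For $GP_n(1,0,1,u+1,u)$ in case 1) one gets $D_0=1=-1$, $D_1=0$, $D_2=1+u+(u+1)=0$, and $D_3=0\cdot1+1\cdot1+1\cdot1=0$, so conditions a)--e) of Theorem \ref{Theorem 2:}(1) hold and the code is self-dual of length $2n$; the class $p=4\omega+3,\ q=4\omega'+1$ with $GP_n(1,1,0,u+1,u)$ is handled identically after the $m_1\leftrightarrow m_2$ swap. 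A useful structural remark is that $D_2$ and $D_3=D_4$ are symmetric in $m_3$ and $m_4$, so each code and its $u\leftrightarrow u+1$ partner (e.g.\ $GP_n(1,0,1,u+1,u)$ and $GP_n(1,0,1,u,u+1)$) produce exactly the same $D_i$; hence they are self-dual together and need not be checked separately.

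For the bordered codes the two extra scalar conditions simplify cleanly. Because $n=pq$ is a product of two odd primes it is odd, so $\alpha+n=0+1=1=-1$, giving condition a); and since $p-1$, $q-1$, and $\tfrac{(p-1)(q-1)}{2}$ are all even, the quantity $S$ reduces modulo $2$ to $-\alpha+m_0$, which vanishes for the stated choices $\alpha=m_0=0$, giving condition b). The remaining conditions c)--f) then read $D_0=0$, $D_1=D_2=D_3=1$, and a substitution exactly parallel to the pure case (e.g.\ $\overrightarrow m=(0,1,0,u,u+1)$ for $GB_n(0,0,1,0,u,u+1)$) confirms them. I expect no deep obstacle: the only real care needed is to apply the correct variant of Lemma \ref{Lemma 2.11} to each prime-class case and to track the characteristic-$2$ reductions $-1=1$, $-2=0$ consistently. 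Once the simplified $D_i$ and the mod-$2$ cyclotomic numbers are fixed, all eight codes follow by routine $\mathrm{GF}(4)$ arithmetic, with the $m_3\leftrightarrow m_4$ symmetry halving the work.
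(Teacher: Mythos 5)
Your proposal is correct and follows essentially the same route as the paper: reduce everything to Theorem~\ref{Theorem 2:} via the characteristic-$2$ simplification of Lemma~\ref{Lemma 2.11}, read off the cyclotomic numbers mod $2$ from Lemma~\ref{lemma3.1}, and finish by direct substitution in $\mathrm{GF}(4)$. If anything, your write-up is slightly more careful than the paper's, which only verifies one bordered case explicitly and whose displayed evaluation of $D_3(0,1,0,u,u+1)$ omits the $(m_0+m_1+m_2)(m_3+m_4)$ term that you correctly retain (both still reach $D_3=1$); your $m_3\leftrightarrow m_4$ symmetry remark is a tidy justification of the paper's ``the rest can be proved similarly.''
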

\begin{table}[!hbpt]
\caption{SOME CODES OVER GF(4)} \label{t01}
\begin{center}
\begin{tabular}{|p{1.23cm}<{\centering}|p{0.25cm}<{\centering}|p{0.25cm}<{\centering}|p{3.5cm}<{\centering}|p{1.5cm}<{\centering}|}
\hline\hline
Code &p &q &Construction &Comments\\
\hline
$[30,15,6]$  &3& 5 &$GP_{15}(1, 1, 0, u+1, u)$  &\\
\hline
$[32,16,8]$  &3  &5  &$GB_{16}(0, 0, 0, 1, u+1,u)$ &almost optimal \\
\hline
\end{tabular}
\end{center}
\end{table}
\begin{proof}
If $p=4\omega +1$ and $q=4\omega^{'}+3$,
\begin{equation*}
\begin{aligned}
&\alpha +N = 4(4\omega\omega^{'}+\omega+3\omega^{'})+3 \equiv 1\pmod{2}, \\
&-\alpha+ m_0 +(p-1)m_1+(q-1)m_2\\
&  \qquad +\frac{(p-1)(q-1)}{2}(m_3 +m_4)\equiv 0\pmod{2},\\
&D_0(0,1,0,u,u+1) \equiv 0\pmod{2},\\
&D_1(0,1,0,u,u+1)\\
=&D_2(0,1,0,u,u+1)\equiv 1\pmod{2},\\
&D_3(0,1,0,u,u+1)\\
=&(0,0)(m_3^{2}+m_4^{2})+((1,0)+(0,1))m_3m_4\\
=&(u+1+u)+(1+0)\times (u+1+u)\equiv 1\pmod{2}
 \end{aligned}
\end{equation*}
By Theorem 2, the code with generator matrix $GB_n(0,0,1,0,u,u+1)$ over GF(4) is self-dual of length $2n+2$.\\
 The rest of this theorem can be proved similarly.
\end{proof}

\section{Conclusions}
In this paper, we construct several infinite families of self-dual codes based on the generalized cyclotomic sets of order two. Because of several good randomness properties, some of them have good parameters. These can enrich the choices of methods to construct good self-dual codes. We believe that these constructions will lead to good self-dual codes. It seems that the construction method by generalized cyclotomic classes of higher order is a rich source to obtain codes with good parameters.

\section*{Acknowledgments}
The work is supported by Fundamental Research Funds for the central Universities(No.17CX02030A), Shandong Provincial Natural science Foundation of China (N0.ZR2016FL01, No.ZR2017MA001, No.ZR2019MF070), Qingdao application research on special independent innovation plan project (o.16-5-1-5-jch), Key Laboratory of Applied Mathematics of Fujian Province University(Putian University)(No.SX201702,No.SX201806), Open Research Fund from Shandong provincial Key Laboratory of Computer Network, (No. SDKLCN-2017-03) and The National Natural Science Foundation of China(No.2017010754).The authors are grateful to the anonymous reviewers for valuable comments.


\section*{Biography}
\noindent {\bf Wenpeng Gao} was born in 1994 in Shandong Province of China. He was graduated from the Department of Mathematics, China University of Petroleum, China, in 2017.He is a graduate student of China University of Petroleum.\vspace*{0.2cm}\\
\noindent {\bf Tongjiang Yan}  was born in 1973 in Shandong Province of China. He was graduated from the Department of Mathematics, Huaibei Coal-industry Teachers College, China, in 1996. He received the M.S. degree in mathematics from the Northeast Normal University, Lanzhou, China, in 1999, and the Ph.D degree in cryptography from Xidian University, Xian, China, in 2007. Now he is a associate professor of China University of Petroleum. His research interests include cryptography and algebra.\vspace*{0.2cm}\\

\end{document}